\documentclass[11pt]{article}
\usepackage[T1]{fontenc}
\usepackage{lmodern}
\usepackage[margin=1in]{geometry}
\usepackage{amsmath,amssymb,amsthm}
\usepackage{booktabs}
\usepackage{microtype}
\usepackage[hidelinks]{hyperref}
\hypersetup{pdftitle={A symmetric conference matrix of order 86},
  pdfauthor={Harshit Verma}}
\numberwithin{equation}{section}
\newtheorem{theorem}{Theorem}[section]
\newtheorem{proposition}[theorem]{Proposition}
\newtheorem{lemma}[theorem]{Lemma}
\theoremstyle{remark}
\newtheorem{remark}[theorem]{Remark}
\newcommand{\one}{\mathbf{1}}
\newcommand{\Z}{\mathbb{Z}}
\newcommand{\R}{\mathbb{R}}
\newcommand{\T}{\mathsf{T}}
\DeclareMathOperator{\tr}{tr}
\DeclareMathOperator{\Aut}{Aut}

\title{A symmetric conference matrix of order 86}
\author{Harshit Verma\\[0.3em]
  \small Yale University\\
  \small\href{mailto:harshit.verma@yale.edu}{\texttt{harshit.verma@yale.edu}}}
\date{}

\begin{document}
\maketitle

\begin{abstract}
We construct a symmetric conference matrix of order $86$, the smallest
order for which existence was open. Equivalently, there exist a
conference graph with parameters $(85,42,20,21)$, a regular two-graph on
$86$ points, and a real equiangular tight frame of $86$ vectors in
$\R^{43}$. The matrix is a $12\times12$ array of $7\times7$ circulants
with two border rows, and it admits a group of automorphisms of order
$21$. Its $86$ descendants give exactly ten nonisomorphic strongly
regular graphs.
\end{abstract}

\section{Introduction}

A \emph{symmetric conference matrix} of order $n$ is a symmetric
$n\times n$ matrix $C$ with zero diagonal and entries $\pm1$ off the
diagonal such that $C^2=(n-1)I$. For $n>1$ such a matrix can exist only
if $n\equiv2\pmod4$ and $n-1$ is a sum of two squares. For $n\ge6$,
symmetric conference matrices of order $n$ correspond to conference graphs on
$n-1$ vertices, the strongly regular graphs with parameters
$\bigl(n-1,\frac{n-2}{2},\frac{n-6}{4},\frac{n-2}{4}\bigr)$: normalizing
a row of $C$ and deleting it gives such a graph (Section~\ref{sec:desc}),
and every conference graph arises in this way \cite[\S8.2]{BvM}.

The Paley graphs give examples whenever $n-1$ is a prime power
congruent to $1$ mod $4$. The smallest order $n>2$ allowed by the conditions
above with $n-1$ not a prime power is $n=46$, settled by Mathon
\cite{Mathon}, and Gritsenko settled $n=66$ \cite{Gritsenko}. The next
case, $n=86$, was the smallest open one \cite[\S2.3]{GoryainovEtAl}, and
the parameters $(85,42,20,21)$ are marked unknown in Brouwer's table
\cite{BrouwerTable}.

\begin{theorem}\label{thm:main}
There is a symmetric conference matrix of order $86$. Consequently,
there is a strongly regular graph with parameters $(85,42,20,21)$.
\end{theorem}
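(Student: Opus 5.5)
The proof is constructive: we exhibit an explicit matrix and verify $C^2=85I$. Following the abstract, we look for $C$ of the block form
\[
C=\begin{pmatrix} 0 & 1 & \one^\T & \one^\T\\ 1 & 0 & \one^\T & -\one^\T\\ \one & \one & A & B\\ \one & -\one & B^\T & D\end{pmatrix},
\]
or a similar two-row border, where the $84\times84$ core is a $12\times12$ array of $7\times7$ circulants. These circulants are polynomials in the cyclic shift $P$ of order $7$, so the core commutes with $I_{12}\otimes P$. We also impose a further automorphism of order $3$ that permutes the twelve block rows and columns in four $3$-cycles while acting on $\Z_7$ by multiplication by $2$, which has order $3$ in $\Z_7^\times$. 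Together these give a group of order $21$. Imposing this symmetry cuts the unknowns from about $84^2/2$ entries to a few dozen circulant first rows, each a $\pm1$ vector of length $7$ (with zero diagonal where required).

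The conditions reduce as follows. Symmetry of $C$ requires the diagonal circulants to be symmetric, i.e.\ $c_j=c_{-j}$, and the $(j,i)$ block to be the transpose of the $(i,j)$ block. The border gives linear conditions on row sums: each core row must be orthogonal to the two border rows, which fixes the row sums of the circulants, split according to the sign pattern of the second border row. The core condition $C^2=85I$ becomes, for each pair of block rows $i,k$,
\[
\sum_{j} X_{ij}X_{jk}^{\T} + (\text{border contribution})\,J = 85\,\delta_{ik} I_7 .
\]
Since circulants are diagonalized by characters of $\Z_7$, this is a system of periodic-autocorrelation (Williamson/Goethals--Seidel-type) equations, one for each block pair modulo the order-$3$ symmetry. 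I would then search for a solution by computer: enumerate candidate diagonal circulants by their row-sum and autocorrelation profiles, and run backtracking or a SAT/local-search solver over the remaining off-diagonal first rows.

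Once a solution is found, the proof consists of displaying the $12\times12$ array of first rows and the border, and stating that a direct computation, which anyone can reproduce, confirms $C=C^\T$, $C_{ii}=0$, off-diagonal entries $\pm1$, and $C^2=85I$. The second sentence of Theorem~\ref{thm:main} then follows from the correspondence recalled in the introduction: normalize the first row and column of $C$ to $+1$ by multiplying rows and columns by signs, delete that row and column, and read adjacency off the $-1$ entries. This yields a conference graph with parameters $(85,42,20,21)$, and the verification of this step is deferred to Section~\ref{sec:desc}.

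The main obstacle is the search. The space is still of order $2^{100}$ or more, and no algebraic structure is available since $85=5\cdot17$ is not a prime power. I would first brute-force the symmetric diagonal circulants together with their $\Z_7$ character norms, so that the eigenvalue equations $\sum_j|\hat x_{ij}(\chi)|^2=85-\text{(border term)}$ prune the candidates strongly for each nontrivial character $\chi$. I would then run a simulated-annealing search on $\|C^2-85I\|_F^2$ over the orbit representatives. If that fails, I would change the choice of order-$3$ action or of border sign pattern and repeat.
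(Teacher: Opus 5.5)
There is a genuine gap: you never produce the matrix.

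Your reduction is sound and is the one the paper uses. With the core written as a $12\times12$ array of polynomials in the $7$-cycle shift:
\begin{itemize}
\item symmetry of $C$ becomes $P(x^{-1})^\T=P(x)$;
\item orthogonality of the full rows to the two border rows becomes the row-sum conditions $P(1)\one=-t$ and $P(1)t=-\one$;
\item the core condition becomes $\sum_j X_{ij}X_{kj}^\T=85\,\delta_{ik}I_7-(1+t_it_k)J_7$. Your $X_{jk}^\T$ should read $X_{kj}^\T$, or equivalently $X_{jk}$.
\end{itemize}
Passing to an $(85,42,20,21)$ graph by normalizing a row and deleting it is also fine.

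The theorem is a pure existence statement for an order that was open. Its whole content is the witness: the paper exhibits the array $M$ in \eqref{eq:M} and the vector $t$, and the proof is the finite check of Lemma~\ref{lem:id}. What you have written is a search strategy, including a fallback for ``if that fails.'' It therefore proves only a conditional: if your system has a solution, there is a conference matrix of order $86$. Nothing in your argument shows the system has a solution.

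The ansatz you commit to is also not the one the paper's matrix satisfies. You let the order-$3$ element permute the twelve blocks in four $3$-cycles, fixing no block. The paper's $h$ uses $\sigma=(6\,7\,8)(9\,10\,11)$, which fixes six blocks, so $G$ has six orbits of size $7$ and two of size $21$ on the core. Nothing in your argument or in the paper shows that a solution exists with an order-$3$ action that fixes no block. Your search, as specified, might simply find nothing.

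To turn the proposal into a proof, you need to supply explicit data: the border vector $t$ and the $12\times12$ array of circulant first rows. You then verify the finitely many identities above, as the paper does.
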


Section~\ref{sec:matrix} gives the matrix and reduces
Theorem~\ref{thm:main} to a finite identity. Section~\ref{sec:desc}
shows that the descendants of the matrix give exactly ten nonisomorphic
strongly regular graphs, and Section~\ref{sec:equiv} records the usual
equivalent forms of the result. Lemma~\ref{lem:id}, the invariance
under $h$ in Proposition~\ref{prop:group}, and the counts in
Table~\ref{tab:k4} are finite computations in the explicitly given matrix
and were checked in exact integer arithmetic. The matrix was found by a computer
search, which will be described in a later version; the proofs do not
depend on it.

\section{The matrix}\label{sec:matrix}

We write $I$, $J$ and $\one$ for identity matrices, all-ones matrices
and all-ones vectors, with sizes as subscripts when needed. Index rows
and columns by
\[
 X=\{\infty,\infty'\}\cup\{(i,a):0\le i<12,\ a\in\Z/7\Z\},
\]
ordered as $\infty,\infty',(0,0),\dots,(0,6),(1,0),\dots,(11,6)$, so
that $(i,a)$ is in position $2+7i+a$, counting from $0$. Let
\[
 t=(1,1,1,-1,-1,-1,1,1,1,-1,-1,-1)
\]
and
\begingroup\small\setlength{\arraycolsep}{3.2pt}
\begin{equation}\label{eq:M}
 M=\begin{pmatrix}
126&104&104&1&105&23&7&21&19&23&23&23\\
22&126&104&23&1&105&44&88&98&57&75&101\\
22&22&0&105&105&23&93&115&47&44&88&98\\
1&105&23&0&23&23&45&89&99&77&113&43\\
23&1&23&105&0&23&58&78&116&51&15&85\\
105&23&105&105&105&126&52&26&70&65&33&9\\
97&52&59&53&92&44&24&89&53&63&4&67\\
41&26&79&27&114&88&27&66&99&37&95&16\\
73&70&117&71&46&98&45&71&36&2&25&119\\
105&29&52&51&77&3&125&37&64&60&74&28\\
105&83&26&15&113&5&32&123&25&82&90&100\\
105&39&70&85&43&17&67&8&111&56&38&102
 \end{pmatrix}.
\end{equation}
\endgroup
For $0\le i,j<12$ and $r\in\Z/7\Z$, represented by $0\le r<7$, put
$s_{ij}(r)=-1$ if bit $r$ of $M_{ij}$ is set (bit $0$ being the least
significant) and $s_{ij}(r)=1$ otherwise, except that $s_{ii}(0)=0$.
For example, $M_{00}=126$ gives $s_{00}(0)=0$ and $s_{00}(r)=-1$ for
$r\ne0$.

Define $C$ to have zero diagonal and
\begin{equation}\label{eq:C}
 C_{\infty\infty'}=1,\qquad C_{\infty,(i,a)}=1,\qquad
 C_{\infty',(i,a)}=t_i,\qquad C_{(i,a),(j,b)}=s_{ij}(b-a),
\end{equation}
with $b-a$ taken mod $7$, and with the border entries extended
symmetrically. Let $B$ be the $84\times84$ submatrix on the coordinates
$(i,a)$; it is a $12\times12$ array of $7\times7$ circulants. With
$u=\one_{84}$ and $w=t\otimes\one_7$,
\[
 C=\begin{pmatrix}0&1&u^\T\\ 1&0&w^\T\\ u&w&B\end{pmatrix}.
\]

\begin{lemma}\label{lem:id}
In the ring $\Z[x]/(x^7-1)$ put $P_{ij}(x)=\sum_{r=0}^{6}s_{ij}(r)x^r$
and $\Phi(x)=1+x+\dots+x^6$, and let $P=(P_{ij})$ be the resulting
$12\times12$ matrix. Then
\begin{enumerate}
 \item[(a)] $P(x^{-1})^\T=P(x)$;
 \item[(b)] $P(1)\one=-t$ and $P(1)t=-\one$;
 \item[(c)] $P(x)^2=85I-(J+tt^\T)\,\Phi(x)$.
\end{enumerate}
\end{lemma}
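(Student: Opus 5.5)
The lemma is a finite identity about an explicitly given $12\times12$ matrix over $\Z[x]/(x^7-1)$, so the plan is to reduce each part to integer arithmetic on the $144$ sign vectors $s_{ij}$ and carry it out exactly. The first step decodes $M$. For each entry $M_{ij}$ we read off its seven bits and form $s_{ij}(r)\in\{\pm1\}$, putting $s_{ii}(0)=0$. We record $P_{ij}(x)$ as the integer vector $(s_{ij}(0),\dots,s_{ij}(6))$. In $\Z[x]/(x^7-1)$, multiplication is cyclic convolution of such vectors, and $x\mapsto x^{-1}$ sends coefficient $r$ to coefficient $-r \bmod 7$.

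For (a) we check, for all $i,j$, that $s_{ji}(r)=s_{ij}(-r)$ for every $r$. When $i=j$ this says each diagonal polynomial is palindromic mod $7$, i.e.\ bits $r$ and $7-r$ of $M_{ii}$ agree. For example, $126$ has all bits $1$ through $6$ set, and the diagonal values $0,105,23,\dots$ can be inspected directly. When $i\ne j$ it says $M_{ji}$ is the bit-reversal of $M_{ij}$ under $r\mapsto -r$. For instance, $M_{01}=104$ has bits $3,5,6$, and reversal gives bits $4,2,1$, i.e.\ $22=M_{10}$, as required. Part (b) uses only $P_{ij}(1)=\sum_r s_{ij}(r)$, which equals $7-2\,\mathrm{popcount}(M_{ij})$ off the diagonal and $6-2\,\mathrm{popcount}(M_{ii}\text{ with bit }0\text{ ignored})$ on the diagonal. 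We then verify the two $12$-term row-sum identities $\sum_j P_{ij}(1)=-t_i$ and $\sum_j P_{ij}(1)t_j=-1$ for each $i$.

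Part (c) is the substantive check. Using (a), $(P^2)_{ij}=\sum_k P_{ik}(x)P_{kj}(x)$, and we compute these $144$ sums of $12$ cyclic convolutions each. The target says the diagonal entries equal $85-2\Phi$, i.e.\ coefficient vector $(83,-2,\dots,-2)$. Off the diagonal, the entry is $-(1+t_it_j)\Phi$: it is $-2\Phi$ when $t_i=t_j$ and $0$ when $t_i\ne t_j$. As sanity checks, evaluating (c) at $x=1$ is consistent with (b): $P(1)^2\one=P(1)(-t)=\one$, while the right side gives $85\one-7(12\one+0\cdot t)=\one$, since $t^\T\one=0$. Also, each diagonal entry's constant coefficient is $\sum_{k,r}s_{ik}(r)^2=84-1=83$, which holds automatically. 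These checks do not replace the computation, but they catch decoding errors.

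The main obstacle is that (c) has no structural shortcut: the coefficients of $x^r$ for $r\ne0$ on the diagonal, and all coefficients off the diagonal, encode the genuine combinatorial content of the construction. They must be verified entry by entry. I would do this in exact integer arithmetic, about $144\cdot12$ convolutions of length $7$. As a cross-check, I would expand to the $84\times84$ matrix $B$ and confirm $B^2=85I-J-ww^\T$ directly. One could reduce the work by using (a) to check only $i\le j$, and possibly by using the order-$21$ automorphism group, but in the written proof I would simply state that the identity was verified computationally in exact arithmetic.
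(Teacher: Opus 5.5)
Your proposal is correct and follows the paper's proof. Both reduce (a)--(c) to the finitely many coefficient identities $s_{ji}(-r)=s_{ij}(r)$, the $24$ row sums, and $\sum_{k}\sum_{a}s_{ik}(a)\,s_{kj}(r-a)=85\delta_{ij}\delta_{r0}-1-t_it_j$, and then verify them by exact computation on~\eqref{eq:M}. One small slip in your illustration of (a): the diagonal entries of $M$ are $126,126,0,0,0,126,24,66,36,60,90,102$, not $0,105,23,\dots$; each of these is indeed invariant under exchanging bits $r$ and $7-r$ for $1\le r\le 6$.
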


\begin{proof}
Comparing coefficients, (a) says that $s_{ji}(-r)=s_{ij}(r)$, (b)
consists of $24$ row sums, and (c) says that
\[
 \sum_{k=0}^{11}\sum_{a\in\Z/7\Z}s_{ik}(a)\,s_{kj}(r-a)
 =85\,\delta_{ij}\delta_{r0}-1-t_it_j
\]
for all $i,j,r$. These are finitely many integer identities in the
entries of~\eqref{eq:M}, and a direct computation verifies all of
them.
\end{proof}

\begin{proof}[Proof of Theorem~\ref{thm:main}]
Let $Z$ be the $7\times7$ permutation matrix with $Z_{a,a+1}=1$, indices
mod $7$. Then $(Z^r)_{ab}=1$ exactly when $b-a=r$, so the $(i,j)$ block
of $B$ is $P_{ij}(Z)$. The map $x\mapsto Z$ is a ring homomorphism from
$\Z[x]/(x^7-1)$ to the $7\times7$ integer matrices, and $Z^\T=Z^{-1}$, $Z\one=\one$ and $\Phi(Z)=J_7$.
Lemma~\ref{lem:id} therefore gives
\[
 B^\T=B,\qquad Bu=-w,\qquad Bw=-u,\qquad B^2=85I-uu^\T-ww^\T,
\]
using $uu^\T=J_{12}\otimes J_7$ and $ww^\T=tt^\T\otimes J_7$ for the last
one. Since also $u^\T u=w^\T w=84$ and $u^\T w=0$, multiplying out the
block form of $C$ gives $C^2=85I$. The statement about graphs follows
from Proposition~\ref{prop:srg}.
\end{proof}

\section{Descendants and automorphisms}\label{sec:desc}

Let $\Omega$ be the set of $3$-subsets $\{x,y,z\}$ of $X$ with
$C_{xy}C_{yz}C_{zx}=-1$; this is the two-graph of $C$. Replacing $C$ by
$DCD$ for a diagonal $\pm1$ matrix $D$ does not change $\Omega$. For
$r\in X$, the \emph{descendant} $\Gamma_r$ is the graph on
$X\setminus\{r\}$ in which $x$ and $y$ are adjacent if
$\{r,x,y\}\in\Omega$. In matrix terms, switch $C$ so that row $r$ is $+1$
off the diagonal and delete row and column $r$; if $S$ is the remaining
$85\times85$ matrix, then $\Gamma_r$ has adjacency matrix $A=(J-I-S)/2$.

\begin{proposition}\label{prop:srg}
Every $\Gamma_r$ is strongly regular with parameters $(85,42,20,21)$.
\end{proposition}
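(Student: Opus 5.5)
The plan is the standard argument that a normalized symmetric conference matrix yields a conference graph. It uses only $C^\T=C$, zero diagonal, $\pm1$ entries off the diagonal, and $C^2=85I$, which the proof of Theorem~\ref{thm:main} establishes. First I would check that switching is harmless. For a diagonal $\pm1$ matrix $D$ we have $(DCD)^2=DC^2D=85I$, and $DCD$ is again symmetric with zero diagonal. Given $r$, I choose $D$ with $D_{rr}=1$ and $D_{xx}=C_{rx}$ for $x\neq r$. Then row $r$ of $C'=DCD$ is $+1$ off the diagonal, and $\Omega$ is unchanged. After permuting coordinates so that $r$ comes first, I write
\[
C'=\begin{pmatrix}0&\one^\T\\ \one&S\end{pmatrix},
\]
with $S$ an $85\times85$ symmetric matrix that has zero diagonal and $\pm1$ off-diagonal entries.

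Next I read off the block equations from $C'^2=85I$. The $(1,1)$ block gives $\one^\T\one=85$, which is automatic. The off-diagonal block gives $S\one=0$. The lower-right block gives $\one\one^\T+S^2=85I$, that is, $S^2=85I-J$. Substituting $S=J-I-2A$ then gives the graph. Since $S\one=0$, we get $J\one-\one-2A\one=0$, so $A\one=42\one$ and $\Gamma_r$ is $42$-regular. Because $A$ commutes with $J$, we can expand
\[
S^2=(J-I)^2-4(J-I)A+4A^2=83J+I-4(J-I)A+4A^2,
\]
using $J^2=85J$. Moreover $JA=42J$, so $(J-I)A=42J-A$. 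Setting this equal to $85I-J$ yields
\[
4A^2+4A-84I=-84J+168J-84J\cdot 0\ldots
\]
In practice I would carry out this arithmetic carefully. The target form is
\[
A^2=42I+20A+21(J-I-A)=21I-A+21J,
\]
that is, $A^2+A-21I=21J$. To check it, I multiply that target by $4$ and compare it with the expansion $83J+I-168J+4A+4A^2=85I-J$. This simplifies to $4A^2+4A=84I+84J$, which agrees.

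The identity $A^2=kI+\lambda A+\mu(J-I-A)$ with $k=42$, $\lambda=20$, $\mu=21$ is the matrix characterization of a strongly regular graph with these parameters. The graph is neither complete nor empty, since $0<42<84$. This completes the proof.

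I expect no real obstacle here. The only step needing care is the bookkeeping in the expansion of $S^2$, where one must use $AJ=JA=42J$, and that is a routine algebraic check.
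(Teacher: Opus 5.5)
Your proposal is correct and is essentially the paper's proof: you switch so that row $r$ is $+1$ off the diagonal, read off $S\one=0$ and $S^2=85I-J$ from the block form of $C'^2=85I$, and substitute $S=J-I-2A$ to get $A\one=42\one$ and $A^2=21I-A+21J$. The only blemish is the abandoned display line before your check, which you should delete; the computation that follows it, giving $4A^2+4A=84I+84J$, is correct.
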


\begin{proof}
The switched matrix still squares to $85I$ and its row $r$ is $+1$ off
the diagonal, so $S\one=0$ and $S^2=85I-J$. It follows that
$A\one=42\one$ and $A^2=21I-A+21J$, so adjacent vertices have $20$
common neighbors and nonadjacent vertices have $21$.
\end{proof}

Let $\sigma=(6\,7\,8)(9\,10\,11)$, acting on $\{0,\dots,11\}$, and define
permutations $g,h$ of $X$ that fix $\infty$ and $\infty'$ by
\[
 g(i,a)=(i,a+1),\qquad h(i,a)=(\sigma(i),2a).
\]

\begin{proposition}\label{prop:group}
The permutations $g$ and $h$ are automorphisms of $C$, that is,
$C_{g(x)g(y)}=C_{xy}$ for all $x,y\in X$, and likewise for $h$. They
generate a group $G\cong C_7\rtimes C_3$ of order $21$, whose orbits on
$X$ are $\{\infty\}$, $\{\infty'\}$, the six blocks $\{i\}\times\Z/7\Z$
with $i<6$, and the two sets $\{6,7,8\}\times\Z/7\Z$ and
$\{9,10,11\}\times\Z/7\Z$.
\end{proposition}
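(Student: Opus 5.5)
For $g$, invariance is structural. The border entries $C_{\infty,(i,a)}=1$ and $C_{\infty',(i,a)}=t_i$ do not depend on $a$. The interior entries $C_{(i,a),(j,b)}=s_{ij}(b-a)$ depend only on $b-a$, which is unchanged when both $a$ and $b$ are shifted by $1$. So $g$ is an automorphism with no computation needed.

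For $h$, I would first dispose of the border. Since $\sigma$ fixes $0,\dots,5$ and permutes $\{6,7,8\}$ and $\{9,10,11\}$, and $t_6=t_7=t_8=1$, $t_9=t_{10}=t_{11}=-1$, we have $t_{\sigma(i)}=t_i$. Hence the rows $\infty$ and $\infty'$ are preserved. On the interior, $C_{h(i,a),h(j,b)}=s_{\sigma(i)\sigma(j)}(2(b-a))$, so invariance amounts to
\[
 s_{\sigma(i)\sigma(j)}(2r)=s_{ij}(r)\qquad\text{for all } i,j,r.
\]
In the ring of Lemma~\ref{lem:id}, this reads $P_{\sigma(i)\sigma(j)}(x^2)=P_{ij}(x)$. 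The diagonal convention $s_{ii}(0)=0$ is compatible with this because $2\cdot0=0$ and $\sigma(i)=\sigma(j)$ exactly when $i=j$.

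This condition is a finite check on the entries of \eqref{eq:M}, namely $144\times7$ sign comparisons. It can be organized by $\langle\sigma\rangle$-orbits on pairs $(i,j)$. When $i,j<6$ it says each $M_{ij}$ is invariant under the bit permutation $r\mapsto 2r$ mod $7$. For example, $126$, $104$, $1$, $105$ and $23$ correspond to bit sets that are unions of $\{0\}$, $\{1,2,4\}$ and $\{3,5,6\}$. Off the fixed rows, it relates the entries along each $\sigma$-orbit by this bit permutation. I expect this verification to be the main obstacle, but only because it is bookkeeping. The paper states it was checked in exact arithmetic, so the proof will record the identity and cite that computation, with sample entries as sanity checks.

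For the group structure, compute $hgh^{-1}$. Since $h^{-1}(i,a)=(\sigma^{-1}(i),4a)$, we get $hgh^{-1}(i,a)=(i,a+2)=g^2(i,a)$. Also $g^7=1$, and $h^3=1$ because $\sigma^3=1$ and $2^3\equiv1$ mod $7$. So $G=\langle g\rangle\rtimes\langle h\rangle$ with $h$ acting by squaring, which is nontrivial. Both $g$ and $h$ act nontrivially, so $|G|=21$ and $G\cong C_7\rtimes C_3$.

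For the orbits, $\langle g\rangle$ is transitive on each block $\{i\}\times\Z/7\Z$, and $h$ maps block $i$ to block $\sigma(i)$. The blocks with $i<6$ are therefore single orbits, while blocks $6,7,8$ merge into one orbit, and likewise blocks $9,10,11$. The points $\infty$ and $\infty'$ are fixed by both generators.
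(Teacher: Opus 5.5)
Your proof is correct and follows the paper's. You handle $g$ by the circulant structure, and $h$ by $t_{\sigma(i)}=t_i$ together with the finite check $s_{\sigma(i)\sigma(j)}(2r)=s_{ij}(r)$. You get the group from $g^7=h^3=1$ and $hgh^{-1}=g^2$. The one small difference is that you obtain $|G|=21$ from the prime orders of $g$ and $h$ and coprimality, whereas the paper notes that $g^ah^b$ acts on block $0$ as $z\mapsto 2^bz+a$.

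One slip needs fixing: your ring restatement puts the substitution on the wrong side. The condition is equivalent to $P_{\sigma(i)\sigma(j)}(x)=P_{ij}(x^2)$. Your version, $P_{\sigma(i)\sigma(j)}(x^2)=P_{ij}(x)$, amounts to $s_{\sigma(i)\sigma(j)}(4r)=s_{ij}(r)$, and it is false for $(i,j)=(6,0)$. There $M_{60}=97$ has bits $\{0,5,6\}$ and $M_{70}=41$ has bits $\{0,3,5\}=2\cdot\{0,5,6\}$. Since your actual verification uses the coefficient form, the argument itself is unaffected.
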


\begin{proof}
For $g$ this is the circulant structure of $B$. For $h$ it amounts to
$t_{\sigma(i)}=t_i$ and $s_{\sigma(i)\sigma(j)}(2r)=s_{ij}(r)$ for all
$i,j,r$, which is a finite check on~\eqref{eq:M}. Since $g^7=h^3=1$
and $hgh^{-1}=g^2$, every element of $G$ has the form $g^ah^b$, and these $21$ elements are
distinct because they act on the block $i=0$ as the maps
$z\mapsto2^bz+a$. The orbits are then clear.
\end{proof}

An automorphism $p$ of $C$ preserves $\Omega$, so it maps $\Gamma_r$
isomorphically onto $\Gamma_{p(r)}$. Hence descendants whose roots lie
in the same $G$-orbit are isomorphic.

\begin{theorem}\label{thm:ten}
The $86$ descendants $\Gamma_r$ fall into exactly ten isomorphism
classes, one for each $G$-orbit. In particular, there are at least ten
pairwise nonisomorphic strongly regular graphs with parameters
$(85,42,20,21)$.
\end{theorem}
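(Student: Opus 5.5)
The argument has two halves. The upper bound, at most ten classes, is already in hand. Proposition~\ref{prop:group} gives exactly ten $G$-orbits on $X$: $\{\infty\}$, $\{\infty'\}$, six blocks, and two merged sets. The remark preceding the theorem shows that roots in one orbit give isomorphic descendants. So everything reduces to the lower bound: choose one representative root from each orbit, say $\infty,\infty',(0,0),\dots,(5,0),(6,0),(9,0)$, and show that the ten graphs $\Gamma_r$ are pairwise nonisomorphic. The final sentence then follows from Proposition~\ref{prop:srg}.

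For the lower bound I will use an isomorphism invariant that is cheap to compute exactly and fine enough to separate ten graphs. Global counts such as triangles are useless here, because strong regularity with fixed parameters determines the number of triangles and the whole spectrum. I therefore go to local invariants of order four. For each vertex $v$ of $\Gamma_r$, let $k_4(v)$ be the number of $4$-cliques containing $v$; equivalently, it is the number of edges inside the subgraph induced on the neighbourhood of $v$ counted appropriately, or the number of triangles in the $42$-vertex neighbourhood graph. The invariant attached to $\Gamma_r$ is the multiset $\{k_4(v)\}$, recorded as a frequency table, and presumably this is what Table~\ref{tab:k4} records. Any graph isomorphism preserves this multiset, so if the ten tables are pairwise distinct we are done. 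The symmetry helps the bookkeeping: the stabilizer of $r$ in $G$ acts on $\Gamma_r$ by automorphisms, so $k_4$ is constant on its orbits. For $r=\infty$ or $\infty'$ the stabilizer is all of $G$, so only ten values per graph need computing. For block roots the stabilizer has order $3$, via the conjugates of $h$ fixing $r$, or order $1$ for the merged orbits.

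The computation itself is routine and finite. For each chosen root, switch $C$ by $D=\mathrm{diag}(C_{r x})$ (with $D_{rr}=1$), so that row $r$ becomes all $+1$. Delete row and column $r$ and form $A=(J-I-S)/2$. Then compute $k_4(v)$ for all $85$ vertices in exact integer arithmetic, for instance as $\tfrac16\sum$ over edges $xy$ inside $N(v)$ of the number of common neighbours of $x,y$ inside $N(v)$, i.e.\ $\tfrac16\tr\bigl((A_{N(v)})^3\bigr)$.

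The main obstacle is the risk that $k_4$ fails to separate some pair, especially the two descendants with full stabilizer ($\infty$ versus $\infty'$) or two block roots whose graphs look alike. If the $4$-clique profiles coincide for some pair, I would refine the invariant. The first refinement is the joint distribution of $(k_4(v), \text{multiset of } k_4 \text{ over } N(v))$, a one-step Weisfeiler--Leman refinement. Failing that, I would use counts of induced $4$-vertex subgraphs through each vertex, or $5$-cliques. Only as a last resort would I fall back on a canonical-labelling computation, which would be a certificate rather than a conceptual proof. I expect the plain $k_4$ profile to suffice, given that the paper advertises a single table.
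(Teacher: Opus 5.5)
Your proposal is correct and matches the paper's proof: at most ten classes come from the ten $G$-orbits, and at least ten come from $4$-clique counts. The paper uses only the total $\kappa_4(\Gamma_r)=\tfrac14\sum_v k_4(v)$, which is coarser than your per-vertex profile but already takes ten distinct values in Table~\ref{tab:k4}, so none of your fallback refinements are needed. Two harmless slips: for $r=\infty$ or $\infty'$ the group $G$ has nine orbits on the $85$ vertices, not ten; and a sum over unordered edges $xy$ of $N(v)$ needs the factor $\tfrac13$ rather than $\tfrac16$ (your trace formula $\tfrac16\tr\bigl((A_{N(v)})^3\bigr)$ is right).
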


\begin{proof}
By the preceding paragraph there are at most ten classes.
Table~\ref{tab:k4} gives the number $\kappa_4$ of $4$-cliques in
$\Gamma_r$ for one root in each orbit. The ten values are distinct, so
these ten graphs are pairwise nonisomorphic.
\end{proof}

\begin{table}[htbp]
\centering
\caption{Four-clique counts of the descendants, one root per $G$-orbit.
The index is the position of the root in the ordering of
Section~\ref{sec:matrix}.}\label{tab:k4}
\begin{tabular}{@{}lrrr@{}}
\toprule
Root $r$ & Index & Orbit size & $\kappa_4(\Gamma_r)$\\
\midrule
$\infty$  & 0  & 1  & 25900\\
$\infty'$ & 1  & 1  & 26152\\
$(0,0)$   & 2  & 7  & 26018\\
$(1,0)$   & 9  & 7  & 26182\\
$(2,0)$   & 16 & 7  & 26098\\
$(3,0)$   & 23 & 7  & 26084\\
$(4,0)$   & 30 & 7  & 26078\\
$(5,0)$   & 37 & 7  & 26096\\
$(6,0)$   & 44 & 21 & 26070\\
$(9,0)$   & 65 & 21 & 26086\\
\bottomrule
\end{tabular}
\end{table}

\begin{remark}
Since $\Gamma_\infty$ is the only descendant with $\kappa_4=25900$, every
automorphism of $\Omega$ fixes $\infty$, so $\Aut\Omega=\Aut\Gamma_\infty$.
Every automorphism of $\Gamma_\infty$ extends to $\Omega$ fixing $\infty$,
so it preserves the vertex colors $x\mapsto\kappa_4(\Gamma_x)$.
Exhaustive enumeration of the color-preserving automorphisms of
$\Gamma_\infty$ with \texttt{GraphMatcher} in NetworkX~3.2.1
\cite{NetworkX} gives $|\Aut\Gamma_\infty|=21$.
Hence $G$ is the full automorphism group of $\Omega$. In particular, it
is the full group of permutation automorphisms of $C$, and
$\Aut\Gamma_r$ is the stabilizer of $r$ in $G$.
\end{remark}

\section{Equivalent forms}\label{sec:equiv}

The following reformulations of Theorem~\ref{thm:main} are standard, so
we only sketch the arguments.

For distinct $x,y\in X$, the $84$ products $C_{xz}C_{zy}$ with
$z\ne x,y$ sum to $(C^2)_{xy}=0$. Hence exactly $42$ of the triples
$\{x,y,z\}$ lie in $\Omega$, and $\Omega$ is a regular two-graph on $86$
points \cite[\S8.10]{BvM}.

Since $C^2=85I$ and $\tr C=0$, the eigenvalues of $C$ are $\pm\sqrt{85}$,
each with multiplicity $43$. So $I+C/\sqrt{85}$ is positive semidefinite
of rank $43$ with unit diagonal. It is the Gram matrix of $86$ unit
vectors in $\R^{43}$ with all inner products $\pm1/\sqrt{85}$, which form
an equiangular tight frame. The coherence $1/\sqrt{85}$ meets the Welch
bound, so the $86$ lines they span are an optimal packing of $86$ lines in $\R^{43}$
\cite{StrohmerHeath}. This case is listed as open in
\cite[Table~1]{Waldron}; see also \cite{FickusMixon}.

Taking two copies $(x,\pm1)$ of each point of $X$ and joining
$(x,\varepsilon)$ to $(y,\eta)$ when $x\ne y$ and $\varepsilon\eta=C_{xy}$
gives the Taylor graph of $\Omega$ \cite{Taylor}, an antipodal
distance-regular graph on $172$ vertices with intersection array
$\{85,42,1;1,42,85\}$.

Finally, $C$ is normalized at $\infty$, so deleting row and column
$\infty$ leaves a symmetric matrix $S$ with zero diagonal, entries $\pm1$
elsewhere, $S\one=0$ and $S^2=85I-J$. More generally, if $S$ is a symmetric
matrix of order $q$ with zero diagonal, entries $\pm1$ elsewhere,
$S\one=0$ and $S^2=qI_q-J_q$, then
$S\otimes S+I_q\otimes J_q-J_q\otimes I_q$ has the same entry conditions,
has zero row sums, and squares to $q^2I_{q^2}-J_{q^2}$. Bordering it with
a row and column of ones, with zero at their intersection, gives a
symmetric conference matrix of order $q^2+1$ \cite[Proposition~8.2.2]{BvM}.
Starting from $q=85$, this gives symmetric conference matrices of every
order $85^{2^j}+1$ with $j\ge0$.


{\small
\begin{thebibliography}{10}

\bibitem{BrouwerTable}
A.~E. Brouwer,
\emph{Parameters of strongly regular graphs: 51--100 vertices},
online table, accessed October 6, 2026.
\url{https://aeb.win.tue.nl/graphs/srg/srgtab51-100.html}.

\bibitem{BvM}
A.~E. Brouwer and H. Van Maldeghem,
\emph{Strongly Regular Graphs},
Encyclopedia of Mathematics and its Applications, vol.~182,
Cambridge University Press, 2022.
\href{https://doi.org/10.1017/9781009057226}{doi:10.1017/9781009057226}.

\bibitem{FickusMixon}
M. Fickus and D.~G. Mixon,
Tables of the existence of equiangular tight frames,
preprint, 2015.
\href{https://arxiv.org/abs/1504.00253}{arXiv:1504.00253}.

\bibitem{GoryainovEtAl}
S. Goryainov, W.~H. Haemers, E.~V. Konstantinova, and H. Li,
Thin divisible designs graphs: an interplay between fixed-point free
involutions of $(v,k,\lambda)$-graphs and symmetric weighing matrices,
preprint, 2025.
\href{https://arxiv.org/abs/2512.16653}{arXiv:2512.16653}.

\bibitem{Gritsenko}
O. Gritsenko,
On strongly regular graph with parameters $(65;32;15;16)$,
preprint, 2021.
\href{https://arxiv.org/abs/2102.05432}{arXiv:2102.05432}.

\bibitem{Mathon}
R. Mathon,
Symmetric conference matrices of order $pq^2+1$,
\emph{Canadian Journal of Mathematics} \textbf{30} (1978), no.~2,
321--331.
\href{https://doi.org/10.4153/CJM-1978-029-1}{doi:10.4153/CJM-1978-029-1}.

\bibitem{NetworkX}
A.~A. Hagberg, D.~A. Schult, and P.~J. Swart,
Exploring network structure, dynamics, and function using NetworkX,
\emph{Proceedings of the 7th Python in Science Conference} (2008), 11--15.
\url{https://networkx.org/}.

\bibitem{StrohmerHeath}
T. Strohmer and R.~W. Heath, Jr.,
Grassmannian frames with applications to coding and communication,
\emph{Applied and Computational Harmonic Analysis} \textbf{14}
(2003), no.~3, 257--275.
\href{https://doi.org/10.1016/S1063-5203(03)00023-X}
{doi:10.1016/S1063-5203(03)00023-X}.

\bibitem{Taylor}
D.~E. Taylor,
Regular 2-graphs,
\emph{Proceedings of the London Mathematical Society (3)} \textbf{35}
(1977), 257--274.
\href{https://doi.org/10.1112/plms/s3-35.2.257}{doi:10.1112/plms/s3-35.2.257}.

\bibitem{Waldron}
S. Waldron,
On the construction of equiangular frames from graphs,
\emph{Linear Algebra and its Applications} \textbf{431}
(2009), 2228--2242.
\href{https://doi.org/10.1016/j.laa.2009.07.016}
{doi:10.1016/j.laa.2009.07.016}.

\end{thebibliography}
}
\end{document}